\newtheorem{thm}{Theorem}
\newtheorem{lem}[thm]{Lemma}
\newtheorem{cor}[thm]{Corollary}
\newtheorem{defini}[thm]{Definition}
\theoremstyle{definition}
\newtheorem{ex}[thm]{Example}
\newtheorem{rema}[thm]{Remark}
\newcommand{\field}[1]{\mathbb{#1}}
\newcommand{\F}{\field{F}}
\newcommand{\cM}{{\mathcal M}}
\newcommand{\PG}{\mathcal{P}_{q}(n)}
\newcommand{\G}{\mathcal{G}_{q}(k,n)}
\newcommand{\Uvs}{\mathcal{U}}
\newcommand{\Vvs}{\mathcal{V}}
\newcommand{\Cvs}{\mathcal{C}}
\newcommand{\C}{\mathcal{C}}
\newcommand{\rs}{\mathrm{rs}}
\newcommand{\enc}{\mathrm{enc}}
\newcommand{\GL}{\mathrm{GL}}
\newcommand{\stab}{\mathrm{stab}}
\newcommand{\ord}{\mathrm{ord}}
\begin{document}

\title{Message Encoding for Spread and Orbit Codes}

\author{\IEEEauthorblockN{Anna-Lena Trautmann}
\IEEEauthorblockA{Department of Electrical and Computer Systems Engineering\\
Monash University, Clayton, Australia}
\IEEEauthorblockA{Department of Electrical and Electronic Engineering\\
University of Melbourne, Australia\\
Email: anna-lena.trautmann@unimelb.edu.au}
\thanks{The author was supported by Swiss National Science Foundation Fellowship No.\ 147304.}
}

\maketitle

\begin{abstract}
Spread codes and orbit codes are special families of constant dimension subspace codes. These codes have been well-studied for their error correction capability and transmission rate, but the question of how to encode messages has not been investigated. In this work we show how the message space can be chosen for a given code and how message en- and decoding can be done.
\end{abstract}

\section{Introduction}

\emph{Subspace codes} are defined to be sets of subspaces of some given ambient space $\F_q^n$ of dimension $n$ over the finite field with $q$ elements. When we talk about \emph{constant dimension codes}, we restrict ourselves to subspace codes, whose codewords all have the same constant dimension.
Subspace codes in general, and constant dimension codes in particular, have received much attention since it was shown in \cite{ko08} how these codes can be used for random network coding.

In that same paper \cite{ko08} a class of Reed-Solomon-like codes is proposed, which was shown to be equivalent to the lifting of maximum rank distance codes \cite{si08j}. For theses codes one can easily find a suitable message space (or message set) $\cM$ and an encoding map, that maps $\cM$ injectively to the subspace code.

During the last years other constructions of subspace codes were developed, e.g.\ in \cite{bo09,et09,et12,et11,ga11,ga10,ko08p,ma08p,sk10,tr10}. Some of these constructions have the mere purpose of giving an improved transmission rate (i.e.\ larger cardinality of the code for the same parameters), while others also have some structure that can be used e.g.\ for decoding. The problem of message encoding has been addressed in almost none of these papers and is hence an open question for most of these codes. We want to study this problem for two classes of subspace codes, namely spread codes and orbit codes.

The paper is organized as follows: In the following section we will give some preliminaries, among others the spread code and orbit code construction. In Section \ref{sec:spread} we investigate a	natural message space and encoding map for Desarguesian spread codes, which we then extend to an encoding map on a set of integer numbers. In Section \ref{sec:orbit} we do the same for orbit codes. In Section \ref{sec:hybrid} we propose a hybrid encoding method, combining two encoding and decoding algorithms for spread codes. We conclude this work in Section \ref{conclusion}.


\section{Preliminaries}\label{sec:preliminaries}

We denote the finite field with $q$ elements by $\F_q$. The set of all subspaces of $\F_q^n$ is denoted by $\PG$ and the set of all subspaces of $\F_q^n$ of dimension $k$, called the \emph{Grassmannian}, is denoted by $\G$. We represent a vector space $\Uvs \in \G$ by a matrix $U\in \F_q^{k\times n}$ such that the row space of $U$, denoted by $\rs(U)$, is equal to $\Uvs$. 
A subspace code is simply a subset of $\PG$ and a constant dimension code is a subset of $\G$. A metric on $\PG$ is given by the subspace distance (\cite{ko08})
$$ d_S(\Uvs, \Vvs) := \dim(\Uvs) + \dim(\Vvs)- 2\dim(\Uvs\cap \Vvs) $$
for any $\Uvs, \Vvs \in \PG$. The minimum distance $d_S(\Cvs)$ of a subspace code $\C \subseteq \PG$ is the minimum of all the pairwise distances of the codewords. Since the dual of a subspace code $\C$ has the same minimum distance as $\C$ (see e.g.\ \cite{ko08}), it is customary to restrict oneself to $k\leq n/2$, which we will assume throughout the paper.

A \emph{spread code} \cite{ma08p} in $\G$ is defined as a set of elements of $\G$ that pairwise intersect only trivially and cover the whole space $\F_q^n$. They exist if and only if $k| n$, have minimum distance $2k$ and cardinality $(q^n-1)/(q^k-1)$. For more information on different constructions and decoding algorithms of spread codes see \cite{go12,ma08p,ma11j,tr13phd}. We will use the following construction, which gives rise to a \emph{Desarguesian spread code} in $\G$ (\cite{tr13phd}):
\begin{enumerate}
\item Let $m:=n/k$ and consider $\mathcal G_{q^k} (1,m)$, which has $q^{k(m-1)}+q^{k(m-2)} + q^{k(m-3)} + \dots + 1 = (q^n-1)/(q^k-1)$ elements. Trivially, all these lines intersect only trivially.
\item Let $P$ be the companion matrix of an irreducible polynomial over $\F_q$ of degree $k$. Then it holds that $\F_{q^k} \cong \F_q[P]$ and we can use this isomorphism in any element of $\mathcal G_{q^k} (1,m)$ (i.e.\ we replace any coordinate with the respective matrix) to receive a spread code in $\G$.
\end{enumerate}

\begin{ex}\label{ex1}
Let $\alpha$ be a root of $x^2+x+1$, i.e.\ a  primitive element of $\F_{2^2}\cong \F_2[\alpha]$. The respective companion matrix is 
$$P=\left(\begin{array}{cc} 0&1\\1&1 \end{array}\right) .$$
Then $\mathcal G_{2^2} (1,2) = \{ \rs (1 , 0), \rs(1 , \alpha), \rs (1 , \alpha^2), \rs(1 , 1), \rs(0 , 1) \}$ and substituting all elements of $\F_{2^2}\cong\F_2[\alpha]$ with its corresponding element from $\F_2[P]$ gives a spread in $\mathcal G_2(2,4)$.
\end{ex}

\emph{Orbit codes} \cite{tr10p} in $\G$ are defined to be orbits of a subgroup of the general linear group $\GL_n$ of order $n$ over $\F_q$. They can be seen as the analogs of linear codes in classical block coding and their structure can be used for an easy computation of the minimum distance of a code and for decoding algorithms (e.g.\ one can define coset leader decoding for them). For more information on orbit codes the interested reader is referred to \cite{ma11p,ro12j,tr13phd,tr11a}. One can also use the orbit code construction to construct spread codes. Note that this construction of spread codes is not equivalent to the Desarguesian construction from before.
\begin{ex}\label{ex2}
The following orbit code is also a spread code in $\mathcal G_2(2,4)$:
$$\rs\left(\begin{array}{cccc} 1&0&0&0\\0&1&1&0 \end{array}\right) \left\langle\left(\begin{array}{cccc} 0&1&0&0 \\ 0&0&1&0 \\ 0&0&0&1 \\ 1&1&0&0 \end{array}\right) \right \rangle$$
\end{ex}

In general, for any given code $\C$ in some space $X$ and some message space $\cM$, the corresponding encoding map
$$\enc : \cM \longrightarrow X$$
is an injective map, mapping any message to a codeword. I.e.\ $\enc(\cM)=\C$. Mostly in the information theory literature $\cM=\{0,\dots,j-1\}$ for some integer $j$. For classical linear block codes the usual message space is $\cM= \F_q^k$ for some integer $k$. If $q=p^r$ for some prime number $p$, then $\F_q^k \cong \F_p^{rk}$ and the \emph{$p$-adic expansion}
\begin{align*}
\phi : \quad \F_p^{rk} &\longrightarrow \{0,\dots,rk-1\}\\
(u_0,\dots, u_{rk-1}) &\longmapsto x=\sum_{i=0}^{rk-1} u_i p^i
\end{align*}
is a bijection. Moreover, $\phi$ and $\phi^{-1}$ can be computed very efficiently (for the inverse one recursively computes $u_{i+1} \equiv (x - u_i)/p \mod p$ with the initial congruence $u_0 \equiv x \mod p$).

In the subspace coding case it is not obvious what $\cM$ would be and how message encoding or decoding can be done. An elegant solution is given for the Reed-Solomon-like codes in \cite{ko08}. For such a code $\C \subseteq \G$ the message space is
$$\cM=\F_{q^{n-k}}^{k-\frac{d_S(\C)}{2}+1} ,$$
which is isomorphic (as a vector space) to $\F_{q}^{(n-k)(k-d_S(\C)/2+1)}$, and the encoding map is given by
\begin{align*}
 \enc : \quad \F_{q^{n-k}}^{k-\frac{d_S(\C)}{2}+1}  &\longrightarrow  \G \\
 (v_0,\dots, v_{k-\frac{d_S(\C)}{2}}) &\longmapsto \langle  (\beta_j, \sum_{i=0}^{k-\frac{d_S(\C)}{2}} v_i \beta_j^i)  \mid j=1,\dots, k\rangle
\end{align*}
where $\beta_1,\dots, \beta_k$ is a basis of $\F_{q^k}$ over $\F_q$ and we use $\F_q^n\cong \F_{q^n}$ on the right side. Via interpolation this map is invertible and the inverse is computable in polynomial time. Hence, one gets a feasible message decoding map as well.

In the following sections we want to investigate if one can find message encoding maps from a set of integers to orbit and spread codes, whose inverse is efficiently computable, as well.


\section{Message Encoding for Desarguesian Spread Codes}\label{sec:spread}

We call a spread in $\G$ Desarguesian if it is isomorphic to $\mathcal G_{q^k} (1,m)$ (where $m=n/k$). For simplicity though, we will work only with codes arising from the construction as described in the previous section. Analog results for the equivalent codes can then easily be derived.

Because of the isomorphic description of the code as all elements of $\mathcal G_{q^k} (1,m)$, the easiest choice of message space is exactly $\cM = \mathcal G_{q^k} (1,m)$ and the encoding map is the second point of the construction in Section \ref{sec:preliminaries}. Let $\alpha$ be a primitive element of $\F_{q^k}$, $p_{\alpha}(x) \in \F_q[x]$ its minimal polynomial and $P_{\alpha} \in \GL_k$ the corresponding companion matrix. Then $\F_{q^k} \cong \F_q[\alpha]$ and any element in $\F_{q^k}$ can be expressed as a polynomial in $\alpha$ of degree less than $k$, and one can define the following encoding map:
\begin{align*}
 \enc_1 : \quad \mathcal G_{q^k} (1,m)  &\longrightarrow  \G \\
 \rs(\sum_{i=0}^{k-1} u_{1i} \alpha^i , \dots, \sum_{i=0}^{k-1} u_{mi} \alpha^i) &\longmapsto
 \rs(\sum_{i=0}^{k-1} u_{1i} P^i , \dots, \sum_{i=0}^{k-1} u_{mi} P^i)    .
\end{align*}
This map is well defined, since all non-zero elements of $\F_q[P]$ have full rank and hence the right side is always and element of $\G$. Note that the left side is represented by a basis vector over $\F_{q^k}$, whereas the right side is represented by a matrix in $\F_q^{k\times n}$, whose row space is the corresponding codeword.

\begin{thm}
The map $\enc_1$ is injective.
\end{thm}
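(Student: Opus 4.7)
The plan is to exploit the ring structure of $\F_q[P]$ in order to reduce equality of row spaces on the right to scalar equivalence of vectors on the left.

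First I would fix notation. For $v_j = \sum_{i=0}^{k-1} u_{ji} \alpha^i \in \F_{q^k}$, write $M(v_j) := \sum_{i=0}^{k-1} u_{ji} P^i \in \F_q[P]$. Then the map $v_j \mapsto M(v_j)$ is the field isomorphism $\F_{q^k} \to \F_q[P]$, so in particular it is a bijection, is additive, multiplicative, and sends nonzero elements to invertible $k\times k$ matrices. The image $\enc_1(\rs(v_1,\dots,v_m))$ is the row space of the block matrix $M(v) := \bigl[\,M(v_1)\mid M(v_2)\mid \cdots \mid M(v_m)\,\bigr] \in \F_q^{k\times n}$, which has rank $k$ because at least one block $M(v_j)$ is invertible.

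Next I would suppose $\enc_1(\rs(v)) = \enc_1(\rs(v'))$, i.e.\ $\rs(M(v)) = \rs(M(v'))$. Then there exists $A \in \GL_k(\F_q)$ with $M(v') = A\, M(v)$. Without loss of generality $v_1 \ne 0$, so $M(v_1)$ is invertible, and since $v'$ and $v$ span the same subspace on the right we also have $M(v'_1)$ invertible, giving $v'_1 \neq 0$. Reading off the first block yields $A = M(v'_1)\,M(v_1)^{-1}$. The crucial observation is that $\F_q[P]$ is a field (isomorphic to $\F_{q^k}$), hence closed under inverses and commutative, so $A \in \F_q[P]$. Therefore $A = M(\lambda)$ for a unique $\lambda \in \F_{q^k}^*$.

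Then for every $j$, using commutativity of $\F_q[P]$ and the fact that $M(\cdot)$ respects multiplication,
\[
M(v'_j) \;=\; A\,M(v_j) \;=\; M(\lambda)\,M(v_j) \;=\; M(\lambda v_j).
\]
Since $M(\cdot)$ is a bijection $\F_{q^k} \to \F_q[P]$, this forces $v'_j = \lambda v_j$ for all $j$, i.e.\ $v' = \lambda v$ with $\lambda \in \F_{q^k}^*$. Hence $\rs(v) = \rs(v')$ in $\mathcal G_{q^k}(1,m)$, proving injectivity.

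The only place where a subtlety arises is the step where a general change-of-basis matrix $A \in \GL_k(\F_q)$ must be recognised as lying in the commutative subring $\F_q[P]$; I expect this to be the main conceptual point, and it is handled precisely by isolating a coordinate where $v_j \neq 0$ and using that $\F_q[P]$ is a field.
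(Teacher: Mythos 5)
Your proof is correct, and it is in fact more complete than the one in the paper. The paper settles the theorem in a single line: injectivity ``follows from the isomorphism $\F_q[\alpha]\cong\F_q[P]$, since $f(\alpha)=g(\alpha)$ if and only if $f(P)=g(P)$'' --- that is, it only records that the coordinate-wise substitution $v_j\mapsto M(v_j)$ is injective on chosen representatives, and leaves implicit exactly the point you identify as the crux: equality in $\G$ is equality of row spaces, so one must exclude the possibility that two non-proportional vectors over $\F_{q^k}$ yield block matrices related by an arbitrary change of basis $A\in\GL_k$. Your argument supplies that step: you read off a block with $v_j\neq 0$ to identify $A=M(v'_j)M(v_j)^{-1}$, use that $\F_q[P]$ is a field (closed under inverses, commutative) to conclude $A=M(\lambda)$ for some $\lambda\in\F_{q^k}^*$, and then pull $\lambda$ back to scalar multiplication over $\F_{q^k}$, so that the two domain elements coincide in $\mathcal G_{q^k}(1,m)$. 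Both proofs hinge on the same isomorphism $\F_{q^k}\cong\F_q[\alpha]\cong\F_q[P]$, but yours carries out the projective-equivalence bookkeeping that the paper's terse proof glosses over (the paper's version would suffice verbatim only if one worked with fixed, e.g.\ normalized, representatives on both sides); the cost is a slightly longer argument, the gain is a proof that genuinely treats the codomain as the Grassmannian. All individual steps you use --- existence of an invertible block, invertibility of $M(v'_j)$, multiplicativity and bijectivity of $M$ --- are justified, so there is no gap.
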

\begin{proof}
This follows from the isomorphism $\F_q[\alpha] \cong \F_{q}[P]$, since $f(\alpha) = g(\alpha)$ if and only if $f(P) = g(P)$ for any $f(x),g(x) \in \F_q[x]$.
\end{proof}

Thus, one can derive an inverse map, called the decoding map. In this case the decoding map is again very simple, and since none of the codewords intersect in a non-zero element, it is enough to consider only one non-zero vector $v\in \F_q^n$ of the codeword to recover the message.
For this we translate that vector $v$ into a vector over $\F_{q^k}$, i.e.\ we partition $v$ into blocks of length $k$ and represent these blocks in their extension field representation ($\F_q^k \cong \F_{q^k} \cong \F_{q}[\alpha]$). This is then a basis of the corresponding message in $\mathcal G_{q^k} (1,m)$.

If one wants to have a unique description of the messages, one can choose the normalized basis vector, i.e.\ the one element of the one-dimensional subspace whose first non-zero entry is equal to one. In the message decoding process, one needs to add an additional step then, that divides all elements of the vector  in $\F_{q^k}^m$ by the first non-zero entry of that new vector.

The reader familiar with projective spaces will notice that $\mathcal G_{q^k} (1,m)$ corresponds exactly to the projective space over $\F_{q^k}$ of dimension $m-1$. The usage of a normalized representative of points in that space is a common concept there.

\begin{thm}\label{thm1}
For a code $\C \subseteq \G$ the decoding map $\enc_1^{-1} :\C \rightarrow \mathcal G_{q^k} (1,m)$ can be computed with a complexity of order $\mathcal{O}_q (kn)$.
\end{thm}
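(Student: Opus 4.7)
The plan is to present the decoding procedure sketched in the paragraph preceding the theorem and to count the number of $\F_q$-operations it requires.

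First, given a codeword $\Uvs \in \C$ represented by a matrix $U \in \F_q^{k \times n}$, I would pick any non-zero row $v = (v_1, \ldots, v_n) \in \F_q^n$; reading it costs $O(n)$ operations in $\F_q$. I would then split $v$ into $m = n/k$ blocks of length $k$ and reinterpret each block $(v_{jk+1}, \ldots, v_{(j+1)k})$ as the field element $\sum_{i=0}^{k-1} v_{jk+i+1}\, \alpha^i \in \F_{q^k}$. Under the fixed isomorphism $\F_q^k \cong \F_q[\alpha]$ with basis $1, \alpha, \ldots, \alpha^{k-1}$ this is a relabelling of coordinates and requires no arithmetic at all. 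The resulting tuple in $\F_{q^k}^m$ spans a line in $\mathcal G_{q^k}(1,m)$ which, by the injectivity established in the previous theorem, is the $\enc_1$-preimage of $\Uvs$.

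To produce the canonical normalized representative, I would additionally locate the first non-zero coordinate $\tilde v^{(j_0)}$, invert it in $\F_{q^k}$, and rescale the $m$-tuple. Locating the first non-zero block costs at most $O(n)$ comparisons in $\F_q$; inverting one element of $\F_{q^k}$ via the extended Euclidean algorithm on polynomials of degree less than $k$ over $\F_q$ costs $O(k^2)$ operations in $\F_q$; and the subsequent $m = n/k$ multiplications in $\F_{q^k}$, realized as polynomial multiplications modulo $p_{\alpha}(x)$, cost $O(k^2)$ operations each, for a total of $O(mk^2) = O(nk)$. Summing over the steps yields the claimed $\mathcal{O}_q(kn)$.

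The only mildly delicate point is booking the cost of the normalization; the coordinate relabelling between $\F_q^n$ and $\F_{q^k}^m$ is essentially free, and all remaining steps are linear in $n$. If one works with an arbitrary non-zero representative instead of the normalized one, the bound drops to $O(n)$, so the stated complexity is entirely accounted for by the $m$ rescaling multiplications in $\F_{q^k}$.
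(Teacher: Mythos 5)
Your proposal is correct and follows essentially the same route as the paper: pick one non-zero vector of the codeword, reinterpret it cost-free as an element of $\F_{q^k}^m$, and account for the normalization by $m=n/k$ field operations in $\F_{q^k}$ of cost $\mathcal{O}_q(k^2)$ each, giving $\mathcal{O}_q(kn)$. Replacing the paper's $m$ divisions by one inversion plus $m$ multiplications is only a cosmetic refinement of the same count.
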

\begin{proof}
Choose one vector $v\in \F_q^n$ of the given codeword and represent it as an element of $\F_{q^k}^m$. 
For the normalization, one needs at most $m = n/k$ divisions over $\F_{q^k}$. Each such division can be done with $\mathcal{O}_q (k^2)$ operations. 
\end{proof}

Note, that in the spread decoding algorithm of $\cite{ma11j}$  one gets the normalized representation of the message along the way in the algorithm and the additional step of message decoding is not necessary.

In the following we will show how one can also encode the message set $\cM = \{1,\dots, (q^n-1)/(q^k-1)\}$ by concatenating $\enc_1$ with yet another map:
\begin{align*}
 f : \{1,\dots, (q^n-1)/(q^k-1)\} \longrightarrow \mathcal G_{q^k} (1,m)   \\
  i \longmapsto  \langle (\underbrace{0, \dots ,0 }_{\epsilon(i)} ,1,  \phi_i^{-1}(i- \sum_{j=0}^{\epsilon(i) -1} q^{jk}) )\rangle   .
\end{align*}
where $\epsilon(i) := m-\min\{y \mid \sum_{j=0}^{y-1} q^{jk} \geq i\}$ and $\phi_i : \F_{q^k}^{m-\epsilon(i) -1} \rightarrow\{1,\dots, q^{k(m-\epsilon(i) -1)}\} $ is the $p$-adic expansion, as explained in Section \ref{sec:preliminaries}.

\begin{thm}
The map $f$ is bijective and hence
$$\enc_2 := \enc_1 \circ f$$
is an injective map from $\{1,\dots, (q^n-1)/(q^k-1)\}$ to $\G$. 
\end{thm}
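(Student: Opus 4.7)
The plan is to establish bijectivity of $f$ by matching a natural block decomposition on each side. First I would observe that every line in $\mathcal{G}_{q^k}(1,m)$ has a unique normalized basis vector whose leading non-zero entry is $1$, and classify the lines by the number $\epsilon \in \{0, 1, \dots, m-1\}$ of leading zeros preceding that $1$. The remaining $m - \epsilon - 1$ coordinates of the normalized vector range freely over $\F_{q^k}$, so block $\epsilon$ contains exactly $q^{k(m-\epsilon-1)}$ lines. Summing gives $\sum_{\epsilon=0}^{m-1} q^{k(m-\epsilon-1)} = (q^n-1)/(q^k-1)$, matching the cardinality of the domain.

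Next I would argue from the definition of $\epsilon(i)$ that the level sets $\{i : \epsilon(i) = \epsilon\}$ form disjoint consecutive subintervals of $\{1, \dots, (q^n-1)/(q^k-1)\}$ whose lengths are precisely $q^{k(m-\epsilon-1)}$: the cutoff $\min\{y \mid \sum_{j=0}^{y-1} q^{jk} \geq i\}$ is exactly $y = m - \epsilon(i)$, so the $\epsilon$-subinterval is the gap between two consecutive partial sums of the geometric series $\sum q^{jk}$. A direct check of those partial sums shows that the appropriate shift translates this subinterval onto $\{1, \dots, q^{k(m-\epsilon-1)}\}$, the domain of $\phi_i^{-1}$. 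By the $p$-adic (here $q^k$-adic) expansion argument from the preliminaries, $\phi_i^{-1}$ is a bijection onto $\F_{q^k}^{m-\epsilon-1}$, which populates the free tail of the normalized vector. Thus on each block, $f$ restricts to the composition of a translation and $\phi_i^{-1}$, giving a bijection onto the set of normalized vectors with exactly $\epsilon$ leading zeros.

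Concatenating the $m$ block-wise bijections yields that $f$ itself is a bijection between $\{1, \dots, (q^n-1)/(q^k-1)\}$ and $\mathcal{G}_{q^k}(1,m)$. For the second claim, the preceding theorem guarantees that $\enc_1$ is injective on $\mathcal{G}_{q^k}(1,m)$, so $\enc_2 = \enc_1 \circ f$ is the composition of a bijection with an injection and is therefore injective from $\{1, \dots, (q^n-1)/(q^k-1)\}$ into $\G$.

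The main obstacle is purely bookkeeping: reconciling the shift $i - \sum_{j=0}^{\epsilon(i)-1} q^{jk}$ with the cumulative block lengths so that it lands in the stated range of $\phi_i$. Once this indexing is pinned down, the proof reduces to an assembly of disjoint bijections together with the already-established injectivity of $\enc_1$.
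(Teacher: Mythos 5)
Your proposal is correct and follows essentially the same route as the paper: both decompose the domain into the consecutive subintervals determined by $\epsilon(i)$, identify each subinterval with the lines whose normalized representative has a fixed number of leading zeros via the bijectivity of $\phi_i$, and then invoke the injectivity of $\enc_1$ for the second claim. The only cosmetic difference is that you verify block-wise bijectivity (hence surjectivity) directly, whereas the paper argues injectivity and concludes bijectivity from the equal cardinalities of domain and codomain.
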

\begin{proof}
We show that $f$ is injective, then by the equal cardinalities of domain and codomain it is automatically bijective. It holds that $1$ is mapped to $\langle(0,\dots,0,1)\rangle$, $\{2,\dots,q^k+1\}$ is mapped to $\langle(0,\dots,0,1, \F_{q^k})\rangle$, $\{q^k+2,\dots,q^{2k}+q^k+1\}$ is mapped to $\langle(0,\dots,0,1, \F_{q^k}^2)\rangle$, etc. Since $\phi_i$ is bijective, the statement follows.
\end{proof}

As before, one can easily find the inverse map of $\enc_2$ and get a message decoding map for the integer message set as well.

\begin{thm}
The maps $\enc_2$ and $\enc_2^{-1}$ are computable with a computational complexity of order at most $\mathcal O_q(kn)$.
\end{thm}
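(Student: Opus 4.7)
The plan is to decompose $\enc_2 = \enc_1 \circ f$ and $\enc_2^{-1} = f^{-1}\circ \enc_1^{-1}$ and bound each component separately. Theorem~\ref{thm1} already supplies the $\mathcal{O}_q(kn)$ bound for $\enc_1^{-1}$, so what remains is to bound $\enc_1$, $f$, and $f^{-1}$ by the same order.

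For $\enc_1$, I would note that its output is a $k\times n$ matrix split into $m=n/k$ blocks, each of the form $\sum_{i=0}^{k-1} u_{ji}P^i \in \F_q[P]$. Because $P$ is a companion matrix, left multiplication by $P$ on a vector in $\F_q^k$ costs only $O(k)$ operations in $\F_q$. I would therefore assemble each block column by column: its first column is $(u_{j0},\dots,u_{j,k-1})^T$, and each subsequent column is the image of the previous one under $P$. This gives $\mathcal{O}_q(k^2)$ per block, and $\mathcal{O}_q(mk^2)=\mathcal{O}_q(nk)$ in total -- which also matches the trivial lower bound of just writing down the $kn$ output entries.

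For $f$ and $f^{-1}$, I would split each into three subroutines: (i) determining $\epsilon$ -- for $f$, by scanning the base-$q^k$ expansion of $i$ for its most significant nonzero position; for $f^{-1}$, by scanning the normalized basis vector in $\F_{q^k}^m$ for its leading $1$; (ii) subtracting or adding the geometric sum $(q^{\epsilon k}-1)/(q^k-1)$, whose base-$q^k$ expansion is simply a block of $\epsilon$ ones, so the adjustment touches at most $m$ digits; and (iii) applying $\phi$ or $\phi^{-1}$ to data of bit-size at most $n\log_p q$. Each of (i)--(iii) is $\mathcal{O}_q(n)$, so both $f$ and $f^{-1}$ run in $\mathcal{O}_q(n)\subseteq \mathcal{O}_q(nk)$.

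The main obstacle I anticipate is verifying that step (iii) actually runs in linear rather than quadratic time. The paper itself observes in Section~\ref{sec:preliminaries} that $\phi$ and $\phi^{-1}$ can be computed efficiently via the recursion $u_{i+1}\equiv (x-u_i)/p\bmod p$; with $x$ stored in positional form, each digit is extracted or emitted in $O(1)$ work, so the total cost is proportional to the number of digits, namely $\mathcal{O}_q(n)$. Composing the three bounds then yields $\mathcal{O}_q(kn)$ for both $\enc_2$ and $\enc_2^{-1}$, as claimed.
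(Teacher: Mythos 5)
Your proposal is correct and follows essentially the same route as the paper: the paper's proof likewise rests on $\epsilon(i)$ and $\phi_i^{\pm 1}$ being cheap (look-up/ordered search and digit manipulation) and on the inverse map being dominated by the normalization cost of Theorem~\ref{thm1}, with the forward map simply asserted to be no more expensive. You merely make explicit what the paper leaves implicit, in particular the $\mathcal{O}_q(kn)$ bound for $\enc_1$ via repeated sparse multiplication by the companion matrix (stated column-wise; depending on the companion-matrix convention it is the row-wise version that applies, which changes nothing in the count).
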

\begin{proof}
Since $\epsilon(i)$ only takes $m-1$ values, one can store these in a look-up table and use an ordered search to find the right value. (But also computing $\epsilon(i)$ without a table can be done efficiently.)
Since $\phi_i$ and $\phi_i^{-1}$ are efficiently computable, the overall complexity of the inverse map is dominated by the normalization (see Theorem \ref{thm1}). Since the complexity of $\enc_2$ is lower than the one of $\enc_2^{-1}$, the statement follows.
\end{proof}

Note that due to simplicity we chose $\cM = \{1,\dots, (q^n-1)/(q^k-1)\}$, but clearly one can change $f$ and thus $\enc_2$ to encode the message set $\{0,\dots, (q^n-1)/(q^k-1)-1\}$.


\section{Message Encoding for Cyclic Orbit Codes}\label{sec:orbit}

Recall that an orbit code $\C\subseteq \G$ is defined as the orbit of a given $\Uvs\in \G$ under the action of a subgroup $G$ of $\GL_n$. In general it holds that $|\Cvs | \leq |G|$, i.e.\ some elements of $G$ might generate the same codewords. Denote by 
$$\stab_{\GL_n}(\Uvs) := \{A\in \GL_n \mid \Uvs A = \Uvs\}$$ 
the stabilizer of $\Uvs$ in $\GL_n$, and by $G/\stab_{\GL_n}(\Uvs)$ the set of all right cosets $\stab_{\GL_n}(\Uvs) A$ for $A \in \GL_n$.  
Then the encoding map can be defined as
\begin{align*}
 \enc_3 : \quad G/\stab_{\GL_n}(\Uvs)  &\longrightarrow \G \\
[A] &\longmapsto \Uvs A  .
\end{align*}
where $[A]$ denotes the coset of $A$.

\begin{thm}
The map $\enc_3$ is injective.
\end{thm}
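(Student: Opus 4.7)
The plan is to invoke the orbit-stabilizer correspondence, adapted to the right-action setting used in the paper. Since $\GL_n$ acts on $\Uvs$ on the right (codewords are $\Uvs A$) and the cosets are right cosets $\stab_{\GL_n}(\Uvs) A$, I would prove both well-definedness and injectivity of $\enc_3$ in one shot by establishing the equivalence
\[
  \Uvs A = \Uvs B \iff B A^{-1} \in \stab_{\GL_n}(\Uvs) \iff [A] = [B].
\]

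First I would check well-definedness: if $A$ and $B$ represent the same right coset, then $B = S A$ for some $S \in \stab_{\GL_n}(\Uvs)$, so
\[
  \Uvs B \;=\; \Uvs (S A) \;=\; (\Uvs S) A \;=\; \Uvs A,
\]
where I used associativity of the $\GL_n$-action and the defining property of the stabilizer. Hence $\enc_3([A])$ does not depend on the choice of representative.

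Next I would address injectivity. Suppose $\enc_3([A]) = \enc_3([B])$, that is $\Uvs A = \Uvs B$. Multiplying on the right by $A^{-1}$ yields $\Uvs = \Uvs B A^{-1}$, so $B A^{-1} \in \stab_{\GL_n}(\Uvs)$. Writing $B = (B A^{-1}) A$ then shows $B \in \stab_{\GL_n}(\Uvs) A$, i.e., $[A] = [B]$ as right cosets of $\stab_{\GL_n}(\Uvs)$ in $G$.

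There is no real obstacle here; this is essentially the classical orbit-stabilizer argument. The only place one must stay disciplined is keeping track of the sidedness: the stabilizer is a subgroup, the cosets in question are the right cosets (matching the right action $\Uvs \mapsto \Uvs A$), and the inverse must be multiplied on the correct side to land inside $\stab_{\GL_n}(\Uvs)$. Once these conventions are fixed, both implications are one-line manipulations.
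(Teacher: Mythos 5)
Your proof is correct and follows essentially the same orbit--stabilizer argument as the paper: from $\Uvs A = \Uvs B$ you deduce that $BA^{-1}$ (the paper uses $AB^{-1}$, which is symmetric) lies in $\stab_{\GL_n}(\Uvs)$ and hence that $A$ and $B$ lie in the same right coset. Your additional well-definedness check is a small completeness bonus the paper leaves implicit, but the core reasoning is identical.
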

\begin{proof}
Let $A,B \in G$. Assume that  $\Uvs A = \Uvs B$, then 
$$AB^{-1} \in \stab_{\GL_n}(\Uvs)$$
and thus $A = AB^{-1}B \in \stab_{\GL_n}(\Uvs) B$. Hence, $A$ and $B$ are in the same right cosets of $\stab_{\GL_n}(\Uvs) $. 
\end{proof}

We now want to find an encoding map for orbit codes with respect to the integer numbers as messages. To do so we will restrict ourselves to cyclic orbit codes in this paper, since these have more useful structure. Moreover, cyclic orbit codes are also better understood from a construction and error decoding point of view.

Cyclic orbit codes are those codes that can be defined by the action of a cyclic subgroup $G$, i.e.\ $G=\langle P\rangle$ for some matrix $P\in \GL_n$. Then one clearly has a bijection from $\cM =\{0,\dots, \ord(P)-1 \}$ to $G$:
\begin{align*}
 g' : \quad  \{0,\dots, \ord(P)-1 \} &\longrightarrow G  \\
i &\longmapsto P^i .
\end{align*}
From group theory (see e.g.\ \cite{ke99}) one knows that $|G/\stab_{\GL_n}(\Uvs)|$ is a divisor of $|G| = \ord(P)$ and that if $\ord_{\Uvs}(P):= |G/\stab_{\GL_n}(\Uvs)| < |G|$, then $\Uvs P^i = \Uvs P^{i+ \ord_{\Uvs}(P)}$. Thus it follows:

\begin{lem}
The map
\begin{align*}
 g : \quad  \{0,\dots, \ord_{\Uvs}(P)-1 \} &\longrightarrow G/\stab_{\GL_n}(\Uvs)  \\
i &\longmapsto P^i .
\end{align*}
is a bijection for any $\Uvs \in \G$.
\end{lem}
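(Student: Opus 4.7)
The plan is to exploit the orbit--stabilizer correspondence that is invoked implicitly in the paragraph just above the lemma: the quantity $\ord_{\Uvs}(P) = |G/\stab_{\GL_n}(\Uvs)|$ coincides with the cardinality of the orbit $\{\Uvs P^i\}_{i \in \Z}$, and equivalently with the smallest positive integer $t$ such that $P^t \in \stab_{\GL_n}(\Uvs)$, i.e.\ $\Uvs P^t = \Uvs$. Granting this characterization, both the domain and the codomain of $g$ have finite cardinality $\ord_{\Uvs}(P)$, so it suffices to verify one of injectivity or surjectivity; I would do injectivity.

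For injectivity, I would suppose $g(i) = g(j)$ with $0 \le j \le i < \ord_{\Uvs}(P)$. Equality of the cosets forces $P^{i-j} \in \stab_{\GL_n}(\Uvs)$, and since $i-j$ lies strictly below the minimal positive stabilizing exponent, the only possibility is $i = j$. Bijectivity then follows from equal finite cardinalities. If one prefers a direct surjectivity check, an arbitrary coset in $G/\stab_{\GL_n}(\Uvs)$ has a representative $P^\ell$ with $\ell \in \{0, \dots, \ord(P)-1\}$; writing $\ell = a\cdot \ord_{\Uvs}(P) + r$ with $0 \le r < \ord_{\Uvs}(P)$ and applying the periodicity relation $\Uvs P^{i} = \Uvs P^{i+\ord_{\Uvs}(P)}$ recorded just above the lemma yields $[P^\ell] = [P^r] = g(r)$.

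The only subtlety worth spelling out, and the only place any work is really done, is the identification of $\ord_{\Uvs}(P)$ as the \emph{minimal} positive exponent stabilizing $\Uvs$, rather than merely some period. This is the orbit--stabilizer theorem applied to $\langle P \rangle$ acting on $\G$: the orbit of $\Uvs$ has size $|G|/|G \cap \stab_{\GL_n}(\Uvs)| = |G/\stab_{\GL_n}(\Uvs)| = \ord_{\Uvs}(P)$, and for a cyclic action the orbit size coincides with the minimal period. Once this is recorded, the rest is a routine coset calculation and I do not anticipate any real obstacle.
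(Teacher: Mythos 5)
Your proposal is correct and is essentially the argument the paper has in mind: the paper gives no explicit proof, asserting the lemma follows from the group-theoretic facts it cites (the index $\ord_{\Uvs}(P)=|G/\stab_{\GL_n}(\Uvs)|$ divides $\ord(P)$ and the periodicity $\Uvs P^i=\Uvs P^{i+\ord_{\Uvs}(P)}$), which is exactly what you flesh out via orbit--stabilizer, minimality of the period, and a counting/coset argument. Your surjectivity variant (division with remainder plus the periodicity relation) is precisely the implicit reasoning behind the paper's ``Thus it follows.''
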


\begin{cor}
The map $\enc_4 := \enc_3 \circ g$ is injective and hence an encoding map for the message set $\cM =\{0,\dots, \ord_{\Uvs}(P)-1 \}$.
\end{cor}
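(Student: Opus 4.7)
The plan is to derive the corollary by a straightforward composition-of-maps argument, using only the two facts just established. By the preceding lemma, the map $g : \{0,\dots,\ord_{\Uvs}(P)-1\} \to G/\stab_{\GL_n}(\Uvs)$ is a bijection, and by the theorem immediately above that lemma, the map $\enc_3 : G/\stab_{\GL_n}(\Uvs) \to \G$ is injective. Since $\enc_4$ is defined as $\enc_3 \circ g$, I would check injectivity in the standard way: if $\enc_4(i) = \enc_4(j)$ for $i,j \in \{0,\dots,\ord_{\Uvs}(P)-1\}$, then $\enc_3(g(i)) = \enc_3(g(j))$; injectivity of $\enc_3$ then yields $g(i) = g(j)$, and injectivity of $g$ yields $i = j$.

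For the second half of the statement I would observe that the image of $\enc_4$ is $\{\Uvs P^i \mid 0 \leq i < \ord_{\Uvs}(P)\}$, which by the definition of $\ord_{\Uvs}(P)$ coincides with the full orbit $\Uvs G$, i.e.\ the cyclic orbit code $\C$ itself. Hence $\enc_4$ is a bona fide encoding map from the integer message set $\cM = \{0,\dots,\ord_{\Uvs}(P)-1\}$ onto the code. There is no genuine obstacle here; the real work was absorbed by the preceding lemma, whose purpose was precisely to cut the naive range $\{0,\dots,\ord(P)-1\}$ down to $\{0,\dots,\ord_{\Uvs}(P)-1\}$ so that distinct integers produce distinct cosets and therefore distinct codewords.
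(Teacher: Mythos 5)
Your argument is correct and is exactly the intended one: the paper states this corollary without proof precisely because it follows immediately from the injectivity of $\enc_3$ and the bijectivity of $g$, which is the composition argument you spell out. Your additional remark that the image equals the full orbit $\Uvs G$ is a harmless and accurate elaboration, not a deviation from the paper's route.
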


Note that $\enc_4$ can be computed very efficiently while its inverse is a discrete logarithm problem (DLP), which is in general a hard problem. There are many results on when the DLP is hard and when it is not; for a survey of various algorithms and their complexities see e.g.\ \cite{od85}. In the following we will investigate some of the easy cases, since these will be the one of interest from an application point of view.


\subsection{Primitive Cyclic Orbit Codes}

For this subsection let $\alpha$ be a primitive element of $\F_{q^n}$, $p_{\alpha}(x) \in \F_q[x]$ its minimal polynomial and $P_{\alpha}$ the corresponding companion matrix. Denote by $G = \langle P_{\alpha} \rangle $ the group generated by it. Because of the primitivity it holds that
$$ \ord(\alpha) = \ord (P_{\alpha}) = |G| = q^n-1 .$$
We call $\C = \Uvs G $ a primitive cyclic orbit code for any $\Uvs\in \G$. For more information on the cardinality and minimum distance of different primitive cyclic orbit codes the interested reader is referred to \cite{tr11a}, but we want to remark that for any valid set of parameters one can construct a spread code as a primitive cyclic orbit code. In this case one constructs $\Uvs$ in such a way that $q^k-1$ of its non-zero elements are in its own stabilizer $\stab_{\GL_n}(\Uvs)$ and hence $G/\stab_{\GL_n}(\Uvs) = (q^n-1)/(q^k-1)$.

Using the Pohlig-Hellman algorithm for DLP \cite[Sec. 3.6.3]{me97b}, one can compute a solution for the discrete logarithm with a computational complexity of order $\mathcal{O}_{q^n}(\sum_{i=1}^r e_i(\log_2 q^n+\sqrt{p_i})) \leq \mathcal{O}_{q}(n^2 \sum_{i=1}^r e_i(\log_2 q^n+ \sqrt{p_i}))$ where $\prod_{i=1}^r p_i^{e_i} $ is the prime factorization of $q^n-1$. 

For simplicity we will now concentrate on the case $q=2$. If $q=2$, the above complexity becomes
\[\mathcal{O}_{2}(n^2 \sum_{i=1}^r e_i(n+\sqrt{p_i})) .\]
Hence, if $2^n-1$ is $n^2$-smooth (i.e.\ if all prime factors of $2^n-1$ are less than or equal to $n^2$) and the largest $e_i$ is less than or equal to $k$, then the order of this complexity is upper bounded by $\mathcal{O}_2(n^3 k)$, which is reasonable. For this note e.g.\ that the complexities of the decoders in \cite{ko08,si08j} are at least cubic in $n$. The decoding complexities of the two error decoding algorithms for primitive cyclic orbit codes in \cite{tr11a} are of order $\mathcal{O}_2(4^{k}(n^2 + k^2 n))$ and $\mathcal{O}_2(nk(nk-k^2-n))$, respectively. Thus, in most cases, the message decoding would not drastically increase the overall complexity.

Table \ref{table1} shows values of $n$ for which $2^n-1$ is $n^2$-smooth. As one can see, also the largest exponent $e_i$ is small, hence the above statement holds for many values of $k$.

\begin{table}
\begin{center}
\begin{tabular}{|c|c|c|c|c|c|}
\hline
$n$ & 
max $p_i$ & max $e_i$ &  $ \max(e_i n, e_i p_i)$ & $n^2$ \\
\hline
$6$ 
& $7  $ & 3 & 18 & 36\\
$8$ 
& 17 & 1 & 17 & 64\\
$9$ 
& 73 &1& 73 & 81\\
$10$ 
& 31 &1& 31 & 100\\
$11$ 
& 89 &1 & 89 & 121\\
$12$ & 13 & 2& 24 &144 
\\
14 & 127& 1& 127 &196
\\
15& 151 & 1 & 151  &225
\\
18& 73 & 3 & 73 &324
\\
20& 41 & 2 & 41 & 400
\\
21& 337 & 2 & 337&441 
\\
24& 241 & 2 &  241 &576
\\
28& 127 & 1 & 127& 784
\\
30& 331 & 2 & 331 & 900
\\
36& 109 & 3 & 109 & 1296
\\
48 & 673 & 2 & 673 & 2304
\\
60 & 1321 & 2 & 1321 & 3600
\\
\hline
\end{tabular}
\caption{$n^2$-smooth $2^n-1 = \prod_{i=1}^r p_i^{e_i}$.}\label{table1}
\end{center}
\end{table}

Thus, we have shown that there exist parameters for which $\enc_4$ is a message encoding function for orbit codes, that has an efficient inverse map, i.e.\ an efficient corresponding decoder. For many parameters though, the procedures described in this section are not efficiently computable, which is why we derive other algorithms for the special class of orbit spread codes in the next section.


%

\section{A Hybrid En- and Decoder for Spread Codes}\label{sec:hybrid}

As mentioned in the previous section, orbit codes have useful structure, which can be exploited for error decoding. E.g.\ the coset leader decoding algorithm for irreducible cyclic orbit codes from \cite{tr11a} has a very low computational complexity. Spread codes are among the most interesting constant dimension codes because of their optimal tradeoff between error correction capability and transmission rate. As mentioned before, they can be constructed as primitive cyclic orbit codes, and we can hence use the coset leader decoder for them. On the other hand, we have an efficient message en- and decoder for Desarguesian spreads, as described in Section \ref{sec:spread}. In this section we want to combine the message en- and decoder for Desarguesian spread codes with the error correction en- and decoder for orbit codes, which we call a \emph{hybrid en- and decoder} for spread codes. 

For this assume that there exist a Desarguesian spread code $S_1 \in \G$ and a primitive cyclic orbit spread code $S_2\in \G$, such that $S_1 A = S_2$ (as sets of vector spaces) for some $A\in \GL_n$. Then we can define the following encoding map for the message space $\cM=\left\{1,\dots, (q^n-1)(q^k-1) \right\}$:
\begin{align*}
 \enc_5 : \quad \cM &\longrightarrow  \G \\
 i & \longmapsto \enc_2(i) A
 \end{align*}

\begin{thm}
The map $\enc_5$ is injective and both $\enc_5$ and $\enc_5^{-1}$ are computable with a computational complexity of order at most $\mathcal O_q(kn^2)$.
\end{thm}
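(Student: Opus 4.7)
The plan is to handle the three claims (injectivity, forward complexity, backward complexity) separately, relying on the previously established properties of $\enc_2$ and on the fact that $A\in\GL_n$ is a fixed invertible matrix.

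For injectivity, the argument is immediate: suppose $\enc_5(i)=\enc_5(j)$, i.e.\ $\enc_2(i)A=\enc_2(j)A$ as subspaces in $\G$. Right-multiplying by $A^{-1}$ (which permutes $\G$ bijectively since $A\in\GL_n$) gives $\enc_2(i)=\enc_2(j)$, and the injectivity of $\enc_2$ from the previous theorem yields $i=j$. So the only real content is the complexity bound.

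For the forward complexity of $\enc_5$, I would compute $\enc_2(i)$ first, which by the previous theorem costs $\mathcal{O}_q(kn)$, and then perform the right-multiplication by $A$. Representing $\enc_2(i)$ as a $k\times n$ matrix over $\F_q$, this matrix product is a $(k\times n)\cdot(n\times n)$ multiplication, costing $\mathcal{O}_q(kn^2)$ field operations. Adding the two terms gives the claimed $\mathcal{O}_q(kn^2)$ bound.

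For $\enc_5^{-1}$, the natural procedure is to invert in reverse order: given a codeword $\Wvs\in S_2$, first right-multiply a generator matrix of $\Wvs$ by the precomputed $A^{-1}\in\GL_n$ to land in the Desarguesian spread $S_1$, then apply $\enc_2^{-1}$. The matrix multiplication again costs $\mathcal{O}_q(kn^2)$ and $\enc_2^{-1}$ costs $\mathcal{O}_q(kn)$, giving the same total. The main subtlety (rather than obstacle) here is that we must treat $A^{-1}$ as precomputed so that its inversion cost does not enter the bound, and we should note that the output of the matrix product is a valid generator matrix of an element of $S_1$ because $A^{-1}$ sends $S_2$ to $S_1$; normalization is already absorbed into the $\enc_2^{-1}$ step from Theorem~\ref{thm1}. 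Taking the maximum of the two contributions yields the stated $\mathcal{O}_q(kn^2)$ bound for both directions.
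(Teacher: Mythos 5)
Your proposal is correct and follows essentially the same route as the paper: multiplication by $A$ (respectively the precomputed $A^{-1}$) costs $\mathcal O_q(kn^2)$ and dominates the $\mathcal O_q(kn)$ cost of $\enc_2$ and $\enc_2^{-1}$, with injectivity immediate from that of $\enc_2$ since $A\in\GL_n$. The paper only adds the minor remarks that decoding can drop to $\mathcal O_q(n^2)$ by using a single representative vector and that the computation can equally be done in the extension field representation, neither of which is needed for the stated bound.
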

\begin{proof}
The multiplication with $A$ can be done with the order of $\mathcal O_q(kn^2)$, which dominates the complexity order of $\enc_2$. The inverse $A^{-1}$ can be precomputed and stored and hence in the decoding map the multiplication with $A^{-1}$ has the same complexity, or only $\mathcal O_q(n^2)$, if we use only only one vector as representative of the whole vector space. The same computations can naturally also be done in the extension field representation, using $\F_q^n \cong \F_{q^n}$.
\end{proof}

Moreover, $\enc_5(\cM) = S_2$, i.e.\ we send codewords of $S_2$ over the channel and can use the corresponding error decoding algorithms for cyclic orbit codes, before we then apply $\enc_5^{-1}$ to recover the message. Note that this gives an efficient message en- and decoder for primitive cyclic orbit spread codes, independent of the discrete logarithm problem.

It remains to show that there are Desarguesian spread codes that are related to primitive cyclic orbit codes by a linear transformation.
In this case one also says that they are linearly isometric (see \cite{tr13phd,tr12}). 
It was shown in \cite{tr12} that not all spreads are linearly isometric, i.e.\ you cannot always find a linear map from one spread in $\G$ to another spread in $\G$. On the other hand, it was also shown that all Desarguesian spreads are linearly isometric. Hence, for our purposes, it remains to investigate when a primitive cyclic orbit spread code is a Desarguesian spread. This can be done by using the algorithm of \cite{thomas}, or by using the following results: Desarguesian spreads are always orbit codes \cite{tr10p}, and two orbit codes are linearly isometric if and only if their generating groups are conjugates \cite{ma11p}. Furthermore, if one of two conjugate groups is cyclic, also the other one is cyclic and there exist two respective generator matrices of the two groups, that are similar. This way, one can check if a given Desarguesian spread code is linearly isometric to a given cyclic orbit spread code. We will illustrate one such pair of codes in the following and concluding example.

\begin{ex}
Let $S_1$ be the spread constructed in Example \ref{ex1} and let $S_2$ be the orbit spread code constructed in Example \ref{ex2}, both subsets of $\mathcal G_2(2,4)$. 
Then
$$ A=\left(\begin{array}{cccc} 1&0&0&0 \\ 0&1&1&0 \\ 1&1&0&0 \\ 0&1&0&1 \end{array}\right) $$
is a linear transformation from $S_1$ to $S_2$. Let $\beta$ be a primitive element of $\F_{q^n}$. 
In the isomorphic extension field representation, $A$ maps the basis $\{1,\beta, \beta^2, \beta^3\}$ of $\F_{q^n}$ over $\F_q$ to the new basis $\{1, \beta+ \beta^2, 1+\beta, \beta+ \beta^3\}$. We can now use $S_1$ for message encoding, say we got the codeword $c \in S_1$, then we send the codeword $c A \in S_2$ over the channel. We can then do error correction decoding in the code $S_2$ with any orbit decoder (e.g.\ with coset leader decoding), say we get the codeword $c' \in S_2$, and transform it to $c' A^{-1} \in S_2$, from which we can then easily get the message as explained in Section \ref{sec:spread}.
\end{ex}


\section{Conclusion}\label{conclusion}

In this work we investigate how message encoding can be done for spread and orbit codes, two families of subspace codes that have been well studied for error correction in random network coding. 

We show that for Desarguesian spread codes one can find encoding maps such that the map itself and the inverse map are efficiently computable. We also show that for general cyclic orbit codes message decoding translates to a discrete logarithm problem, which is efficiently computable for some sets of parameters, but not in general. In the end we propose a hybrid en- and decoder for spread codes, such that one can use the orbit structure for error correction, but avoid the discrete logarithm problem in the message decoding part.

The results for orbit codes are shown for primitive cyclic orbit codes, but a generalization to arbitrary irreducible cyclic orbit codes is straight-forward. Furthermore, with some more effort one can then generalize these results to general cyclic orbit codes.

An open question for further research is if one can find general results on when cyclic orbit spread codes are Desarguesian and how to find the linear transformation from one spread into the other without the help of the algorithm of $\cite{thomas}$. Moreover, one can investigate if there are other codes where a hybrid en- and decoder can be helpful to combine efficient error correction decoders with efficient message decoders.


\bibliographystyle{plain}

\bibliography{network_coding_stuff}

\end{document}